\documentclass[journal]{IEEEtran}
\usepackage{amsmath,amsfonts}
\usepackage{amssymb}
\usepackage{amsthm}
\usepackage{thmtools}
\usepackage{algorithmic}
\usepackage{algorithm}
\usepackage{array}
\usepackage[caption=false]{subfig}
\usepackage{textcomp}
\usepackage{stfloats}
\usepackage[colorlinks, linkcolor=blue, anchorcolor=blue, citecolor=blue, urlcolor=blue]{hyperref}
\usepackage{url}
\usepackage{verbatim}
\usepackage{graphicx}
\usepackage{cite}

\captionsetup[figure]{name={Fig.},labelsep=period} 

\declaretheorem[style=plain]{theorem}
\declaretheorem[style=plain]{proposition}
\declaretheorem[style=remark,qed={}]{remark}

\begin{document}

\title{\huge Planar Diffractive Neural Networks Empowered Communications:\\A Spatial Modulation Scheme}

\author{Xiaokun Teng,\IEEEmembership{}
	Yanqing Ren,\IEEEmembership{}
	Weicong Chen,\IEEEmembership{~Member,~IEEE,}
	Wankai Tang,\IEEEmembership{~Member,~IEEE,}\\
	Xiao Li,\IEEEmembership{~Member,~IEEE,} 
    and Shi Jin,\IEEEmembership{~Fellow,~IEEE}
	
	\thanks{Xiaokun Teng, Yanqing Ren, Weicong Chen, Wankai Tang, and Shi Jin are with the School of Information Science and Engineering, Southeast University, Nanjing 210096, China (e-mail: xkteng@seu.edu.cn; renyq@seu.edu.cn; cwc@seu.edu.cn; tangwk@seu.edu.cn; li\_xiao@seu.edu.cn; jinshi@seu.edu.cn).\textit{(Corresponding authors: Shi Jin; Wankai Tang.)}}
	
}

\markboth{Journal of \LaTeX\ Class Files,~Vol.~14, No.~8, August~2021}%
{Shell \MakeLowercase{\textit{et al.}}: A Sample Article Using IEEEtran.cls for IEEE Journals}

\maketitle

\begin{abstract}
Diffractive neural networks, where signal processing is embedded into wave propagation, promise light-speed and energy-efficient computation. However, existing three-dimensional structures, such as stacked intelligent metasurfaces (SIMs), face critical challenges in implementation and integration. In contrast, this work pioneers planar diffractive neural networks (PDNNs) empowered communications, a novel architecture that performs signal processing as signals propagate through artificially designed planar circuits. To demonstrate the capability of PDNN, we propose a PDNN-based space-shift-keying (PDNN-SSK) communication system with a single radio-frequency (RF) chain and a maximum power detector. In this system, PDNNs are deployed at both the transmitter and receiver to jointly execute modulation, beamforming, and detection. We conduct theoretical analyses to provide the maximization condition of correct detection probability and derive the closed-form expression of the symbol error rate (SER) for the proposed system. To approach these theoretical benchmarks, the phase shift parameters of PDNNs are optimized using a surrogate model-based training approach, which effectively navigates the high-dimensional, non-convex optimization landscape. Extensive simulations verify the theoretical analysis framework and uncover fundamental design principles for the PDNN architecture, highlighting its potential to revolutionize RF front-ends by replacing conventional digital baseband modules with this integrable RF computing platform.

\end{abstract}

\begin{IEEEkeywords}
Planar diffractive neural network, radio-frequency computing, spatial modulation, space-shift-keying, symbol error rate analysis.
\end{IEEEkeywords}

\section{Introduction}
\IEEEPARstart{T}{he} forthcoming sixth-generation (6G) mobile communication networks are envisioned to deliver ubiquitous connectivity, ultra-high energy and spectral efficiency, while enabling a wide range of emerging applications, spanning from artificial intelligence to integrated sensing and communication~\cite{tariqSpeculativeStudy6G2020a, you6GWirelessCommunication2020}. The key performance indicators (KPIs) defining this new era include peak data rates in the terabit-per-second (Tbps) range, sub-millisecond latency, and an unprecedented connection density of 10\textsuperscript{6}–10\textsuperscript{8} devices per km\textsuperscript{2} \cite{itu-rRecommendationM21600Framework2023}. While these ambitious goals are expected to revolutionize how people interact with the digital world, they also impose immense pressures on the evolution of traditional wireless communication infrastructures. For decades, wireless technologies have relied on an architecture that separates the analog radio-frequency (RF) front end from the digital baseband processing unit \cite{liangLowComplexityHybridPrecoding2014}. In this paradigm, computational tasks such as modulation, beamforming, and detection are predominantly executed in the digital domain, necessitating the use of power-hungry digital signal processors and high-resolution analog-to-digital/digital-to-analog converters (ADCs/DACs). To meet the unprecedented capacity demands of 6G, higher frequency bands and extremely large-scale antenna arrays are expected to be exploited  \cite{akyildizTerahertzBandCommunication2022, luTutorialNearFieldXLMIMO2024}. However, this evolution leads to a continuous increase in hardware complexity and power consumption in the conventional digitally-centered approach, making it less efficient from the perspective of sustainable development \cite{prasadEnergyEfficiencyMassive2017}.

In response to the above limitations, various techniques aimed at reducing the hardware cost and power consumption have been explored over the years. Among them, the reconfigurable intelligent surface (RIS) has emerged as a key motivator for 6G \cite{basarReconfigurableIntelligentSurfaces2024, liuReconfigurableIntelligentSurfaces2021, huangReconfigurableIntelligentSurfaces2019}. An RIS is a planar surface composed of many low-cost elements, each capable of inducing a controllable phase shift on the incident electromagnetic (EM) wave \cite{tangWirelessCommunicationsReconfigurable2020}. The most common application of RISs involves deploying them in the wireless environment to intelligently control signal propagation. By manipulating the phase shifts of its elements, RISs can reconfigure the wireless channel to enhance desired signals or cancel interference, thereby improving the performance and reliability of existing communication links \cite{sangCoverageEnhancementDeploying2024a, chenChannelCustomizationJoint2023, hanLargeIntelligentSurfaceassisted2019, wuIntelligentReflectingSurface2019}. Moreover, the potential of RISs extends far beyond channel enhancement. Research has explored incorporating RIS into the transceiver, leading to innovative modulation architectures that reduce hardware complexity and power consumption \cite{chengReconfigurableIntelligentSurfaces2022, tangWirelessCommunicationsProgrammable2019}. One such paradigm is RIS-based reflection modulation, where the RIS directly modulates an unmodulated carrier wave by dynamically switching between different reflection patterns. Different modulation schemes, such as frequency-shift keying (FSK) \cite{zhaoProgrammableTimedomainDigitalcoding2019}, phase-shift keying (PSK) \cite{daiWirelessCommunicationsSimplified2019}, quadrature amplitude modulation (QAM) \cite{chenAccurateBroadbandManipulations2022}, and reflection pattern modulation \cite{cuiDirectTransmissionDigital2019} can be realized through this method. Multiplexed and MIMO communication were also implemented to improve the efficiency of RIS-based transmitter \cite{zhangWirelessCommunicationScheme2021, tangMIMOTransmissionReconfigurable2020}. Another meaningful direction is RIS-empowered index modulation (IM) or spatial modulation (SM) \cite{basarReconfigurableIntelligentSurfaceBased2020}. In these schemes, information is encoded in the indices of activated RIS elements \cite{liuRealizationReconfigurableIntelligent2022a}, or transmitting and receiving antennas \cite{maLargeIntelligentSurface2020, liNovelRISAidedReceive2024}. This approach inherits the benefits of traditional SM, leading to improved efficiency and a simpler hardware structure \cite{meslehSpatialModulation2008}. Despite these architectural advancements, RIS-based modulation schemes still typically rely on digital baseband processors at the receiver to perform complex signal processing tasks, such as signal detection. Therefore, while they represent a significant step toward simpler hardware, they do not completely eliminate the reliance on digital baseband.

Motivated by the developments of RIS, the emergence of diffractive deep neural network (D²NN) \cite{linAllopticalMachineLearning2018} and D²NN-inspired architectures challenges the traditional fully-digital computing paradigm. They aim to perform signal processing directly in the EM domain as the waves propagate through transmissive RISs (T-RISs), thereby realizing the concept of ``computing-by-propagation''. A D²NN comprises multiple cascaded layers of transmissive metasurfaces, where each element on a layer acts as a secondary wave source, contributing to the diffracted field on the next layer. By carefully engineering the transmission coefficients of these layers, D²NNs can be trained to execute complex functions, such as image classification and human action recognition \cite{zhouLargescaleNeuromorphicOptoelectronic2021, chenDiffractiveDeepNeural2024}, entirely through wave diffraction. In \cite{liuProgrammableDiffractiveDeep2022}, this architecture was extended to the millimeter-wave band using T-RISs, enabling mobile communication encoding–decoding and real-time multi-beam focusing. Since such EM-domain signal processing operates at the speed of light with minimal power consumption, it offers extremely high energy efficiency and low latency. Motivated by these advantages, D²NN concepts have been further adapted to wireless communications through the proposal of RIS based deep neural networks (Rb-DNNs) \cite{wangInterplayRISAI2021} and stacked intelligent metasurfaces (SIMs) \cite{anStackedIntelligentMetasurfaces2023a}. Among these, SIM has received widespread attention in recent years, and has been explored for a range of applications, including beamforming \cite{hassanEfficientBeamformingRadiation2024}, direction-of-arrival estimation \cite{anTwodimensionalDirectionofarrivalEstimation2024}, and integrated sensing and communication (ISAC) \cite{wangMultiuserISACStacked2024}, demonstrating its potential to offload computationally intensive tasks from the digital baseband. 

However, these three-dimensional (3D) wave-based signal processors face intractable practical challenges. Firstly, their performance is critically sensitive to the precise alignment of the stacked layers; sub-wavelength misalignments can cause severe distortion. Meanwhile, the near-field mutual coupling effects between closely spaced layers are difficult to model accurately with conventional scalar diffraction theories, leading to a gap between theory and practice. More importantly, the 3D nature of these structures makes them difficult to integrate with the modern planar circuits in user equipment as computation modules, which greatly limits their application scenarios. These challenges in modeling, implementation, and integration hinder the widespread adoption of SIM in practical communication systems.

Recently, the two-dimensional (2D) version of D²NNs has emerged with the development of planar diffractive neural networks (PDNN) \cite{guDirectElectromagneticInformation2024, gaoTerahertzSpoofPlasmonic2024, gaoProgrammableSurfacePlasmonic2023}. Unlike their 3D counterparts, these networks confine and guide EM waves along the surface of a planar substrate, such as a printed circuit board (PCB). The ``diffraction'' between layers is artificially engineered using transmission line couplers. The learnable parameters of these networks are realized through arrays of tunable phase shifters, strategically placed within each layer to modulate the phase of the propagating signals. The PDNN architecture enables seamless integration of this EM neural network with other RF front-end components on a single substrate, facilitating on-device deployment in user equipment. Moreover, this planar structure inherently resolves the critical issues of layer-to-layer alignment and offers superior physical consistency. The behavior of signals within these structures can be accurately described by well-established transmission line theory, reducing the modeling uncertainties associated with near-field mutual coupling. Motivated by the advantages of this emerging platform, this paper proposes a PDNN-based space-shift-keying (PDNN-SSK) communication system. Empowered by PDNNs, the proposed system can be trained to execute joint modulation, beamforming, and detection entirely in the wave domain, therefore achieving reduced system complexity. The primary contributions of this paper are summarized as follows:
\begin{itemize}
	\item{We propose a novel PDNN-SSK communication architecture that deploys a PDNN at both the transmitter and the receiver. At the transmitter, the PDNN processes the SSK signal to synthesize the desired wavefront for radiation. At the receiver, the PDNN acts as an RF demodulator that focuses the signal energy onto a specific output port corresponding to the transmitted symbol. This allows for simple detection via non-coherent power comparison, replacing the entire digital baseband processing module with a planar RF computing network.}
	\item{We conduct a rigorous theoretical analysis of the PDNN-SSK system. For the non-coherent detection scheme, we derive the conditional correct detection probability (CCDP) and prove its monotonicity and maximization condition. Meanwhile, we provide an accurate closed-form expression for the symbol error rate (SER) under interference-free conditions, along with a tight asymptotic expression at high signal-to-noise ratios (SNRs). These derivations offer crucial theoretical benchmarks for the system performance.}
	\item{We formulate the phase configuration of the PDNNs as an optimization problem aimed at maximizing the effective achievable sum-rate. To solve this high-dimensional, non-convex problem, we abstract the entire PDNN-SSK communication system into a differentiable surrogate model. This allows us to efficiently find a high-quality solution using a gradient-based optimizer.}
	\item{We conduct extensive simulations to verify the theoretical analysis framework and uncover fundamental design principles for the proposed architecture. The closed-form SER expressions are validated via Monte Carlo simulations. Meanwhile, the PDNN-SSK scheme is benchmarked against existing modulation technologies, highlighting its superior scalability and energy efficiency. The efficiency of the optimization method is underscored by comparison with conventional gradient ascent methods. Further results reveal the trade-offs among network depth, width, and artificial coupling strength, shedding light on the optimal design of PDNN structural parameters.}
\end{itemize}

The remainder of this paper is organized as follows. Section II introduces the PDNN-SSK system architecture, detailing the general model of the PDNN and its role in the SSK communication system. Section III provides the performance analysis, deriving the CCDP and the closed-form SER. Section IV formulates the PDNN optimization problem and presents the surrogate model-based training method to solve it. Section V provides our numerical results. Finally, Section VI concludes the paper and outlines directions for future research.

\textit{Notations}: Throughout this paper, we adopt the following notations. The set of complex and real numbers are denoted as $\mathbb{C}$ and $\mathbb{R}$, respectively. Scalars are denoted by italic letters, while vectors and matrices are represented by boldface lowercase (e.g., $\mathbf{v}$) and uppercase (e.g., $\mathbf{V}$) letters, respectively. $\mathbb{E}[\cdot]$ denotes the expectation operator. The operators $(\cdot)^{\mathrm{T}}$ and $(\cdot)^{\mathrm{H}}$ stand for the transpose and conjugate transpose. $|\cdot|$ denotes the magnitude of a complex scalar, while $\|\cdot\|$ is the Euclidean norm of a vector. $\operatorname{diag}(\mathbf{v})$ returns a diagonal matrix with the elements of $\mathbf{v}$ on its main diagonal, and $\operatorname{blkdiag}(\mathbf{V}_1, \cdots, \mathbf{V}_N)$ creates a block diagonal matrix from matrices $\mathbf{V}_1, \cdots, \mathbf{V}_N$. $\mathcal{CN}(\mu, \sigma^2)$ denotes a complex circularly-symmetric Gaussian distribution with mean $\mu$ and variance $\sigma^2$.

\begin{figure*}[ht!]
    \centering
    \subfloat[]{%
        \includegraphics[width=0.9\linewidth]{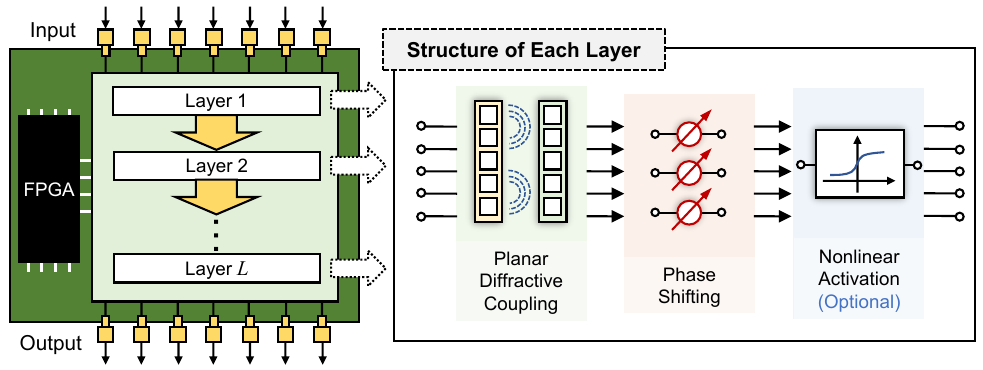}
        \label{fig:model_PDNN_a}
    }\hfill
    \subfloat[]{%
        \includegraphics[width=0.9\linewidth]{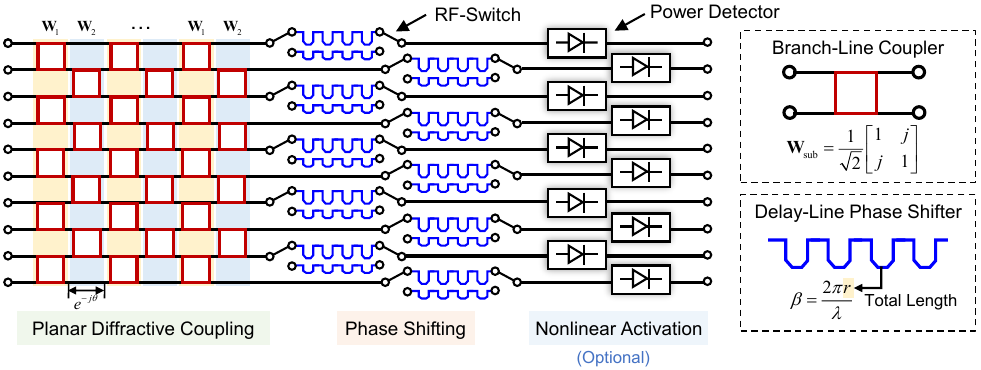}
        \label{fig:model_PDNN_b}
    }
	\vspace{-0.1cm}
    \caption{PDNN signal processing architecture. (a) General model of signal flow and layer structure. (b) Schematic diagram of an example layer design.}
    \label{fig:model_PDNN}
    \vspace{-0.4cm}
\end{figure*}

\section{The Proposed PDNN Empowered Communication System Model}
\label{s2}

\subsection{The General Model of PDNN}
\label{s2A}

To illustrate the concept of PDNN, we envisage a PCB with multiple input and output ports, fabricated with planar or flexible substrates \cite{guDirectElectromagneticInformation2024}. The RF signals propagate along the surface of PCB via artificially designed transmission circuits such as microstrip lines, while being processed by a multi-layer coupling and phase-shifting structure. Fig.~\ref{fig:model_PDNN}(a) depicts a general model of PDNN, which is composed of $L$ cascaded layers. Each layer consists of the following main modules: a planar diffractive coupling module, a phase-shifting module, and optionally, a nonlinear activation module to introduce nonlinearity. A field-programmable gate array (FPGA) is utilized to provide phase-shifting control signals.

To depict the signal model, we denote \( \mathbf{x}^{(0)} \in \mathbb{C}^{N^{(0)}} \) as the input signal to the entire PDNN, and let \( \mathbf{x}^{(l)} \in \mathbb{C}^{N^{(l)}} \) represent the output signal of the \( l \)-th layer, where \( N^{(l)} \) denotes the number of output ports of the \( l \)-th layer, \( l \in \mathcal{L} \) and \( \mathcal{L} = \{1, 2, \ldots, L\} \). The signal \( \mathbf{x}^{(l)} \) also serves as the input to the subsequent \( (l + 1) \)-th layer. In each layer, the signal sequentially passes through the aforementioned three functional modules. To characterize the coupling module, we introduce a transfer matrix \( \mathbf{W}^{(l)} \in \mathbb{C}^{N^{(l)} \times N^{(l-1)}} \). Notably, in the context of PDNNs, the transfer matrix \( \mathbf{W}^{(l)} \) captures the artificially engineered coupling effect that occurs as signals propagate along the PCB surface to emulate diffraction behavior. After that, the signal enters the phase-shifting module. We model the phase shift using a diagonal matrix \( \boldsymbol{\Phi}^{(l)} \in \mathbb{C}^{N^{(l)} \times N^{(l)}} \), whose diagonal entries are given by \( \phi_n^{(l)} = e^{j \beta_n^{(l)} } \) with \( \beta_n^{(l)} \) denoting the continuous phase shift applied at the \( n \)-th node of the \( l \)-th layer. These phase shift values in \( \boldsymbol{\Phi}^{(l)} \) are the main tunable parameters of the PDNN and are optimized to realize the desired mapping between input and output signals. Furthermore, to enhance the capability of PDNN, an optional nonlinear activation function \( f^{(l)} : \mathbb{C}^{N^{(l)}} \to \mathbb{C}^{N^{(l)}} \) can be applied after phase-shifting. Consequently, the effect of layer $l$ on the input signal can be expressed as
\begin{align}
	\mathbf{x}^{(l)} = f^{(l)}(\boldsymbol{\Phi}^{(l)} \mathbf{W}^{(l)} \mathbf{x}^{(l-1)}).
\end{align}
If no nonlinearity is applied, \( f^{(l)} \) defaults to the identity function, and the overall transformation of PDNN becomes
\begin{align}
	\label{eq:PDNN}
	\mathbf{x}^{(L)} = \boldsymbol{\Phi}^{(L)}\,\mathbf{W}^{(L)}\,\cdots\boldsymbol{\Phi}^{(1)}\,\mathbf{W}^{(1)} \mathbf{x}^{(0)}.
\end{align}

\subsection{An Example Design of PDNN layer}
\label{s2b}
Fig.~\ref{fig:model_PDNN}(b) depicts an example hardware design of the above PDNN layer model. This example illustrates how to implement a PDNN layer with available RF components. The planar diffractive coupling module is realized using a cascaded network composed of branch-line couplers \cite{guDirectElectromagneticInformation2024, gaoTerahertzSpoofPlasmonic2024}, which are 2-input and 2-output devices with a square structure. These couplers are printed on the PCB substrate and act as power splitters and combiners that distribute signal energy among multiple pathways. The characteristics of a basic component of these couplers can be mathematically described using a $2\times 2$ transfer matrix as
\begin{align}
	\mathbf{W}_\text{sub} = \frac{1}{\sqrt{2}} \begin{bmatrix} 1 & j \\ j & 1 \end{bmatrix},
\end{align}
which characterizes that each input port has an effect on each output port. When multiple branch-line couplers are arranged in parallel, the overall transfer matrix between the $N_\text{max}^{(l)}$ input ports and $N_\text{max}^{(l)}$ output ports can be constructed as
\begin{align}
	\mathbf{W}_1 = \operatorname{blkdiag}(\underbrace{ \mathbf{W}_\text{sub}, \dots, \mathbf{W}_\text{sub} }_{ N_\text{max}^{(l)}/2 }),
\end{align}
where $N_\text{max}^{(l)}$ is the smallest even integer greater than or equal to $\max\left\{ N^{(l)}, N^{(l-1)} \right\}$. Due to the special structure of this matrix, simply cascading $\mathbf{W}_{1}$ can only achieve coupling between each 2-input and 2-output port pairs, which is inadequate for establishing diffraction-like multi-port coupling. Therefore, another arrangement approach is required to construct an alternative transfer matrix. For instance, this matrix can be constructed as
\begin{align}
	\mathbf{W}_2 = \operatorname{blkdiag}(e^{-j\theta},\underbrace{ \mathbf{W}_\text{sub}, \dots, \mathbf{W}_\text{sub} }_{ N_\text{max}^{(l)}/2-1 },e^{-j\theta}),
\end{align}
where $\theta$ denotes the phase delay of the direct paths at the network boundaries, as illustrated in Fig.~\ref{fig:model_PDNN}(b). By cascading these structures for $M_\text{c}^{(l)}$ times, we can form a base transfer matrix for the layer-$l$:
\begin{align}
	\mathbf{W}_\text{base}^{(l)} = (\mathbf{W}_2 \mathbf{W}_1)^{M_\text{c}^{(l)}} \in \mathbb{C}^{N_\text{max}^{(l)} \times N_\text{max}^{(l)}},
\end{align}
which is a square matrix. To align with the general model in Section~\ref{s2A} where $\mathbf{W}^{(l)} \in \mathbb{C}^{N^{(l)} \times N^{(l-1)}}$ can be non-square, we first design a larger structure (i.e., the square matrix $\mathbf{W}_\text{base}^{(l)}$) and then obtain the actual matrix $\mathbf{W}^{(l)}$ by selecting a sub-matrix from it. In physical terms, this corresponds to fabricating or modeling the larger coupler network but only utilizing $N^{(l-1)}$ of its input ports and observing the signals at $N^{(l)}$ of its output ports. Furthermore, We note that the design approach for $\mathbf{W}^{(l)}$ is not unique. More diverse coupler selections and path planning can be incorporated into the design of the coupling module, leading to a variety of other transfer matrices, even with reconfigurability \cite{gaoProgrammableSurfacePlasmonic2023}.

After propagating through the branch-line couplers, the signal enters the phase-shifting module. As illustrated in Fig.~\ref{fig:model_PDNN}(b), this module is implemented using delay-line phase shifters, wherein the phase shift imparted to the signal directly corresponds to the lengths of microstrip delay lines. If we use $r$ to denote the total length of a certain signal path, the phase shift imparted to the signal can be expressed as $\beta = \frac{2\pi r}{\lambda}$, where $\lambda$ is the wavelength. These signal paths of different lengths can be meticulously designed to provide a set of discrete phase states by incorporating RF switches to switch between them \cite{rossaneseDesigningBuildingCharacterizing2022}. The low-cost RF switches and delay-line components render this architecture highly power-efficient \cite{wangReconfigurableIntelligentSurface2024}. Moreover, alternative phase shifter designs based on varactors allow for continuous phase tuning at each element, offering better flexibility in signal manipulation at the cost of more power consumption \cite{abboshCompactTunableReflection2012}. Accordingly, we maintain the assumption that the phase shift is continuously tunable for subsequent analysis and optimization, leaving discrete phase shift design for future research.

The realization of the optional nonlinear module is flexible, contingent upon the specific nonlinear function to be employed. In this work, we implement this nonlinearity using power detectors at the output ports of the last layer of PDNNs. These detectors convert the complex-valued wave signals into real-valued DC signals whose intensities are proportional to the signal power. This transformation allows the outputs of the PDNN, after normalization, to be interpreted as a probability distribution for subsequent tasks such as signal detection or image classification. While power detectors are typically exclusive to the final layer, other nonlinearities can be inserted into the other layers of PDNN to fulfill specific requirements \cite{gaoProgrammableSurfacePlasmonic2023}.

\begin{remark}
The proposed PDNN architecture offers compelling advantages for high-speed and energy-efficient wave-domain signal processing. Its inherently planar structure supports high spatial density, facilitating compact integration and high-throughput computation. When fabricated on a flexible substrate, PDNNs can conform to non-planar surfaces for versatile deployment. Meanwhile, unlike 3D diffraction systems suffering from significant propagation loss, the wave-guiding and coupling mechanism of PDNN ensures low propagation loss. It is thus feasible to construct deep PDNNs while maintaining high energy efficiency. Moreover, the planar design suppresses unpredictable near-field coupling and inter-layer reflection effects, resulting in higher fidelity between the fabricated PDNN and its theoretical model.
\end{remark}

\begin{figure*}[ht!]
	\centering
	\includegraphics[width=0.95 \textwidth]{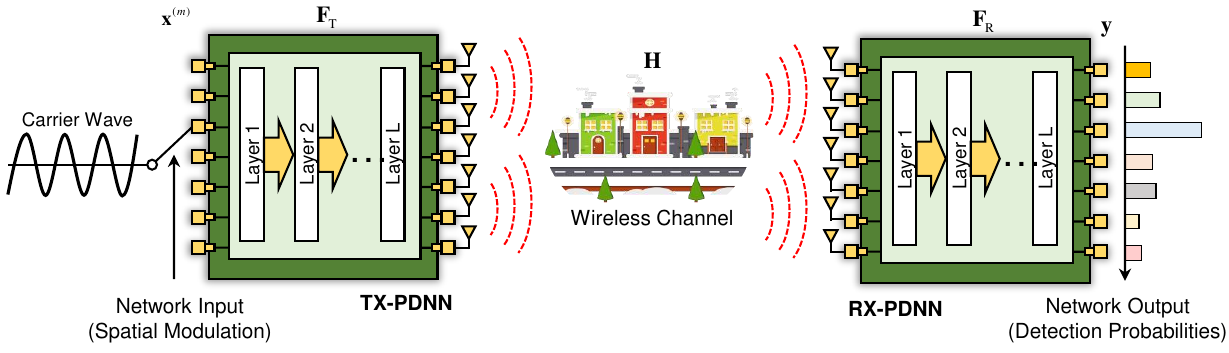}
	\vspace{-0.1cm}
	\caption{The schematic diagram of PDNN-SSK communication system}
	\label{fig:ASMMethod}
	\vspace{-0.4cm}
\end{figure*}

\subsection{PDNN-SSK communication system}

To fully harness the signal processing capabilities of PDNNs and simplify the transceiver architecture of conventional communication systems, we design a PDNN-SSK communication system with a single RF chain and a maximum power detector. As illustrated in Fig.~2, a PDNN is deployed at both the transmitter (TX) and receiver (RX) to serve as an electromagnetic signal processor. Only the final layer of the RX-PDNN deploys power detectors to provide nonlinearity, while all other layers in both the TX-PDNN and RX-PDNN do not include nonlinear modules. To distinguish the notations of TX-PDNN and RX-PDNN, we extend the previously defined symbols in Section \ref{s2A} by adding the subscripts ``\( \text{T} \)'' and ``\( \text{R} \)'', respectively. The number of layers for the TX-PDNN and RX-PDNN are denoted by \( L_\text{T} \) and \( L_\text{R} \), with the corresponding sets of layer indices given by \( \mathcal{L}_\text{T} = \{1, 2, \ldots, L_\text{T}\} \) and \( \mathcal{L}_\text{R} = \{1, 2, \ldots, L_\text{R}\} \), respectively. The transfer matrices are denoted as \( \mathbf{W}_\text{T}^{(l)} \in \mathbb{C}^{N_\text{T}^{(l)} \times N_\text{T}^{(l-1)}}, l \in \mathcal{L}_\text{T} \) and \( \mathbf{W}_\text{R}^{(l)} \in \mathbb{C}^{N_\text{R}^{(l)} \times N_\text{R}^{(l-1)}}, l \in \mathcal{L}_\text{R} \), where \( N_\text{T}^{(l)} \) and \( N_\text{R}^{(l)} \) are the number of ports in the PDNN layer $l$ at the transmitter and the receiver, respectively. The phase shift matrices are represented by \( \boldsymbol{\Phi}_\text{T}^{(l)} \in \mathbb{C}^{N_\text{T}^{(l)} \times N_\text{T}^{(l)}} \) and \( \boldsymbol{\Phi}_\text{R}^{(l)} \in \mathbb{C}^{N_\text{R}^{(l)} \times N_\text{R}^{(l)}} \). Other related notations are extended following the same rule.

At the transmitter, only one RF chain is required to emit a pure carrier wave, and the SSK modulation is implemented by selecting a specific PDNN port to feed the signal. Let $\mathbf{s}\in\mathbb{C}^{M}$ denote the one-hot port activation vector as well as the transmitted signal vector, where $M=2^{p}\in\mathbb{N}^{+}$ represents the modulation order and $p$ denotes the number of transmitted bits. When the $m$-th port is activated, $\mathbf{s}$ can be expressed by
\begin{align}
	\mathbf{s} = [\underbrace{0, \cdots, 0}_{m-1}, \underbrace{1}_{m\text{-th entry}}, \underbrace{0, \cdots, 0}_{M-m}]^T,  \quad m\in \mathcal{M},
\end{align}
where $\mathcal{M}=\left\{ 1,2,\dots,M \right\}$. With \( \mathbf{x}^{(0)}_\text{T} \in \mathbb{C}^{N^{(0)}_\text{T}} \) denoting the input signal to the entire TX-PDNN, we have $\mathbf{x}^{(0)}_\text{T}=\mathbf{s}$ and $N^{(0)}_\text{T}=M$.

After being processed by the TX-PDNN, the signals are radiated from the transmit antenna array connected to the TX-PDNN output ports. These signals then propagate through the wireless channel and are captured by the receive antenna array before being fed into the RX-PDNN for further processing. For this end-to-end transmission, the only nonlinearity considered in our proposed system comes from the power detectors in the final layer of the RX-PDNN. The received signal $\mathbf{y}\in \mathbb{R}^{M}$ at the the $M = N^{(L_\text{R})}_\text{R}$ output ports of RX-PDNN can thus be expressed as
\begin{align}
	\mathbf{y}=|\mathbf{F}_{\mathrm{\text{R}}}\mathbf{H}\,\mathbf{F}_{\mathrm{\text{T}}}\mathbf{s}+\mathbf{n}|^2,
\end{align}
where $\mathbf{H}\in\mathbb{C}^{N_\text{R}^{(0)} \times N_\text{T}^{(L_\text{T})}}$ stands for the Rayleigh-fading channel matrix with independent and identically distributed (i.i.d.) entries following $\mathcal{CN}(0,1)$, and the additive noise $\mathbf{n}\in\mathbb{C}^{M}$ follows $\mathcal{CN}(0,2\sigma^2\mathbf{I})$. $\mathbf{F}_{\mathrm{\text{T}}}$ and $\mathbf{F}_{\mathrm{\text{R}}}$ denote the transfer matrices of the entire TX-PDNN and RX-PDNN, respectively. According to \eqref{eq:PDNN}, they are defined as
\begin{align}
	\mathbf{F}_{\mathrm{\text{T}}}=\boldsymbol{\Phi}_{\text{T}}^{(L_{\text{T}})}\,\mathbf{W}_{\text{T}}^{(L_{\text{T}})}\,\cdots\boldsymbol{\Phi}_{\text{T}}^{(1)}\,\mathbf{W}_{\text{T}}^{(1)},\\
	\mathbf{F}_{\mathrm{\text{R}}}=\mathbf{W}_{\text{R}}^{(L_{\text{R}})}\,\boldsymbol{\Phi}_{\text{R}}^{(L_{\text{R}})}\,\cdots\mathbf{W}_{\text{R}}^{(1)}\,\boldsymbol{\Phi}_{\text{R}}^{(1)},
\end{align}
where in the RX-PDNN, the order of the planar diffractive coupling module and the phase modulation module is reversed.\footnote{The reason for this reversal is that if the final module in the RX-PDNN were a phase modulation module, its effect would be nullified by the subsequent power detection.}

At the receiver, the signal detection is carried out in a non-coherent manner, relying solely on the power levels at the output ports of the RX-PDNN. The index of the output port with the maximum power level is selected and mapped to the corresponding transmitted symbol, which is expressed as
\begin{align}
	\hat{m} = \arg\max_{m} y_m,
\end{align}
where \( \hat{m} \) denotes the estimated symbol index. 

\section{Performance Analysis}
\label{sec:performance}
In this section, we analyze the performance of the proposed PDNN-SSK communication system when the receiver employs a non-coherent detector. This detector selects the symbol corresponding to the output port with the maximum signal magnitude---a method that simplifies receiver hardware at the cost of being suboptimal compared to a coherent maximum likelihood (ML) detector. Our objective is to derive the conditions under which this detection scheme achieves its optimal performance and provide closed-form SER expressions.

\subsection{Conditional Correct Detection Probability}
When the symbol $\mathbf{s}_m$ is transmitted, the detector makes a correct decision if the signal power at the $m$-th output port is greater than that at any other port. The CCDP, given the channel matrix $\mathbf{H}$, is defined as
\begin{align}
	\label{eq:pcm_ini}
	P_{c,m} &= P(\hat{m}=m \mid \mathbf{s}_m, \mathbf{H}) \nonumber \\
	&= P\left( |y_m| > |y_{m'}|, \forall m' \in \mathcal{M}, m' \neq m \mid \mathbf{s}_m, \mathbf{H} \right),
\end{align}
where comparing signal powers is equivalent to comparing their amplitudes in this context. We define the effective channel coefficient from the $m$-th transmitter port to the $m'$-th receiver port as 
\begin{align}
    c_{m',m} = \mathbf{f}_{\text{R},m'}^{\mathrm{T}}\mathbf{HF}_\text{T}\mathbf{s}_m,
\end{align}
where $\mathbf{f}_{\text{R},m'}^{\mathrm{T}}$ denotes the $m'$-th row of $\mathbf{F}_\mathrm{R}$. The received signal at the $m'$-th output port thus can be expressed as $y_{m'} = c_{m',m} + n_{m'}$. Accordingly, the CCDP expression \eqref{eq:pcm_ini} can be recast into
\begin{align}
	\label{eq:pcm_2}
	P_{c,m} = P\bigg( \bigcap_{\substack{m' \in \mathcal{M} \\ m' \neq m}} \bigg\{ |c_{m,m} + n_m| > |c_{m',m} + n_{m'}| \bigg\} \,\bigg|\, \mathbf{s}_m, \mathbf{H} \bigg).
\end{align}
Let $X_{m'} = |y_{m'}| = |c_{m',m} + n_{m'}|$. Since the noise term $n_{m'} \sim \mathcal{CN}(0, 2\sigma^2)$, the random variable $X_{m'}$ follows a Rician distribution. By conditioning on the value of $X_m$ and leveraging the independence of the noise terms across different ports, we transform \eqref{eq:pcm_2} into
\begin{equation}
	\label{eq:pcm_integral}
	\begin{aligned}
		P_{c,m}=& \int_0^{\infty} f_{X_m}(r; |c_{m,m}|) \\
		&\times \Bigg( \prod_{\substack{m' \in \mathcal{M} \\ m' \neq m}} P\left( X_{m'} < r  \mid \mathbf{s}_m, \mathbf{H} \right) \Bigg) dr,
	\end{aligned}
\end{equation}
where $f_{X_m}(r; \mu)$ is the probability density function (PDF) of a Rician random variable with non-centrality parameter $\mu = |c_{m,m}|$, given by
\begin{align}
	f_{X_m}(r; \mu) = \frac{r}{\sigma^2} \exp\left(-\frac{r^2+\mu^2}{2\sigma^2}\right) I_0\left(\frac{r\mu}{\sigma^2}\right), \, r \ge 0,
\end{align}
and $I_0(\cdot)$ is the modified Bessel function of the first kind and zero order. The corresponding cumulative distribution function (CDF) is
\begin{align}
	F_{X_{m'}}(r; \mu) = 1 - Q_1\left(\frac{\mu}{\sigma}, \frac{r}{\sigma}\right),
\end{align}
where $Q_1(\cdot, \cdot)$ is the first-order Marcum Q-function \cite{marcumStatisticalTheoryTarget1960}. 

To determine the optimal conditions for detection, we establish the following theorems.
\begin{theorem}[Monotonicity of CCDP]
	\label{thm:monotonicity}
	The CCDP $P_{c,m}$ is monotonically increasing with the desired signal amplitude, $|c_{m,m}|$, and monotonically decreasing with each of the interfering signal amplitudes, $|c_{m',m}|$, for all $m' \neq m$.
\end{theorem}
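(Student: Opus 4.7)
The plan is to read (14) as $P_{c,m} = \mathbb{E}\bigl[g(X_m)\bigr]$ for the nondecreasing bounded function $g(r) \triangleq \prod_{m' \neq m} F_{X_{m'}}(r;|c_{m',m}|)$ of a Rician random variable $X_m$ with non-centrality $|c_{m,m}|$, and then to dispatch the two monotonicity claims by (i) pointwise monotonicity of $F_{X_{m'}}(\cdot;\mu)$ in $\mu$ for the interference terms and (ii) first-order stochastic dominance of the Rician family in its non-centrality parameter for the desired-signal term. Both reductions lean on a single analytic ingredient: the strict monotonicity of the first-order Marcum Q-function $Q_1(a,b)$ in its first argument $a$, which is a classical property.

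Concretely, each CDF factor is nondecreasing in $r$ and takes values in $[0,1]$, so $g$ inherits both properties. For the interference direction, the identity $F_{X_{m'}}(r;\mu) = 1 - Q_1(\mu/\sigma, r/\sigma)$ combined with $a \mapsto Q_1(a,b)$ being strictly increasing for every fixed $b>0$ yields that, for each fixed $r>0$, the factor $F_{X_{m'}}(r;|c_{m',m}|)$ is strictly decreasing in $|c_{m',m}|$. Pointwise dominance of the nonnegative integrand in (14) then gives strict monotonicity of $P_{c,m}$ in $|c_{m',m}|$ by a direct comparison under the integral sign.

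For the desired-signal direction, the same Marcum-Q fact gives first-order stochastic dominance of $X_m$ in $|c_{m,m}|$: for $\mu_1 > \mu_2 \ge 0$ and every $r>0$,
\begin{align}
	P(X_m > r \mid \mu_1) = Q_1\!\left(\tfrac{\mu_1}{\sigma}, \tfrac{r}{\sigma}\right) > Q_1\!\left(\tfrac{\mu_2}{\sigma}, \tfrac{r}{\sigma}\right) = P(X_m > r \mid \mu_2).
\end{align}
Since $g$ is nondecreasing and bounded, the standard characterization of first-order stochastic dominance gives $\mathbb{E}_{\mu_1}[g(X_m)] \ge \mathbb{E}_{\mu_2}[g(X_m)]$, with strict inequality because $g$ is strictly increasing on a set of positive Lebesgue measure as long as at least one interference channel is nondegenerate. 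This establishes strict monotonicity of $P_{c,m}$ in $|c_{m,m}|$.

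The only nontrivial step is the monotonicity of $Q_1(a,b)$ in $a$. I would cite this as a known result rather than re-derive it; if an in-paper proof were desired, the cleanest route is to differentiate the canonical integral representation of $Q_1$ and invoke $I_0'(x) = I_1(x) \ge 0$ together with $I_1(x)/I_0(x) < 1$ to conclude $\partial Q_1/\partial a > 0$. Everything else in the proof is a repackaging of stochastic dominance and hence essentially mechanical.
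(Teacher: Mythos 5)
Your proposal is correct and follows essentially the same route as the paper: both rewrite $P_{c,m}=\mathbb{E}[g(X_m)]$ with $g(r)=\prod_{m'\neq m}F_{X_{m'}}(r;|c_{m',m}|)$, obtain the interference direction from the strict increase of $Q_1(a,b)$ in $a$ applied pointwise under the integral, and obtain the desired-signal direction from stochastic dominance of the Rician family in its non-centrality parameter. The only difference is presentational: the paper proves the dominance-to-expectation step explicitly via integration by parts, whereas you cite the standard first-order stochastic dominance characterization (and your caveat about a ``nondegenerate'' interference channel is unnecessary, since each factor $F_{X_{m'}}(r;\mu)$ is strictly increasing in $r>0$ even when $\mu=0$).
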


\begin{proof}
	First, we establish the monotonicity of $P_{c,m}$ with respect to an interfering signal amplitude $|c_{m',m}|$, where $m' \neq m$. The value of $P_{c,m}$ in \eqref{eq:pcm_integral} is dependent upon the product of CDF terms 
	\begin{equation}
		\begin{aligned}
			P\left( X_{m'} < r  \mid \mathbf{s}_m, \mathbf{H} \right) &= F_{X_{m'}}(r; |c_{m',m}|) \\
			&= 1 - Q_1 \left(\frac{|c_{m',m}|}{\sigma}, \frac{r}{\sigma} \right),
		\end{aligned}
	\end{equation}
	in which the Marcum Q-function $Q_1(a,b)$ is strictly increasing with respect to its first argument $a$ for $a,b > 0$ \cite{sunMonotonicityLogConcavityTight2010}. Consequently, $P\left( X_{m'} < r  \mid \mathbf{s}_m, \mathbf{H} \right)$ is a strictly decreasing function of $|c_{m',m}|$. As all terms within the integrand of \eqref{eq:pcm_integral} are non-negative, it directly follows that $P_{c,m}$ is a strictly decreasing function of $|c_{m',m}|$, for all $m' \neq m$.

	Next, we prove that $P_{c,m}$ is monotonically increasing with $|c_{m,m}|$, $\forall m \in \mathcal{M}$. Let $X_\alpha=|c_{\alpha}+n_{\alpha}|$ and $X_\beta=|c_{\beta}+n_{\beta}|$ denote the received signal amplitude corresponding to noncentral parameters $c_\alpha$ and $c_\beta$, respectively, with $0 < |c_\alpha| < |c_\beta|$. Since \eqref{eq:pcm_integral} can be regarded as the expectation of a function of random variable $X_m$, which can be defined as
	\begin{align}
		g(X_m) = \prod_{m' \neq m} F_{X_{m'}}(X_m; |c_{m',m}|),
	\end{align}
	it is equivalent to prove that $\mathbb{E}[g(X_{\alpha})] < \mathbb{E}[g(X_{\beta})]$. According to the aforementioned monotonicity of the Marcum Q-function, it can be readily shown that $g(X_m)$ is a strictly increasing function for $X_m \ge 0$. We define the difference in expectation as
	\begin{equation}
		\begin{aligned}
			\Delta &= \mathbb{E}[g(X_{\beta})] - \mathbb{E}[g(X_{\alpha})] \\
			&= \int_0^\infty \left(f_{X_\beta}(r; |c_\beta|) - f_{X_\alpha}(r; |c_\alpha|)\right) g(r) dr.
		\end{aligned}
	\end{equation}
	Applying integration by parts yields
	\begin{equation}
		\begin{aligned}
			\Delta =& \left[ g(r) \left( F_{X_\beta}(r;|c_\beta|) - F_{X_\alpha}(r;|c_\alpha|) \right) \right]_0^\infty  \\
			&- \int_0^\infty \left( F_{X_\beta}(r;|c_\beta|) - F_{X_\alpha}(r;|c_\alpha|) \right) g'(r) dr.
		\end{aligned}
	\end{equation}
	The boundary term is zero, since $F_{X_{\alpha}}(0;|c_{\alpha}|) = F_{X_{\beta}}(0;|c_{\beta}|) = 0$ and $F_{X_{\alpha}}(+\infty;|c_{\alpha}|) = F_{X_{\beta}}(+\infty;|c_{\beta}|) = 1$. The expression thus reduces to
	\begin{align}
		\label{eq:inta1}
		\Delta = \int_0^\infty \left( F_{X_\alpha}(r; |c_\alpha|) - F_{X_\beta}(r; |c_\beta|) \right) g'(r) dr.
	\end{align}
	Given that $|c_\alpha| < |c_\beta|$, we have 
	\begin{equation}
		\begin{aligned}
			F_{X_{\alpha}}(r;|c_{\alpha}|) =& 1-Q_1\left( \frac{|c_{\alpha}|}{\sigma}, \frac{r}{\sigma}\right) \\
			>& 1-Q_1\left( \frac{|c_{\beta}|}{\sigma}, \frac{r}{\sigma}\right) = F_{X_{\beta}}(r;|c_{\beta}|).
		\end{aligned}
	\end{equation}
	Meanwhile, $g'(r) > 0$ because $g(r)$ is strictly increasing. The integrand in \eqref{eq:inta1} is therefore strictly positive for $r > 0$, which ensures that $\Delta > 0$, i.e., 
	\begin{align}
		\mathbb{E}[g(X_{\alpha})] < \mathbb{E}[g(X_{\beta})], \text{ if }|c_\alpha| < |c_\beta|.
	\end{align}
	This confirms that $P_{c,m}$ is strictly increasing with $|c_{m,m}|$.
\end{proof}

\begin{proposition}[Optimal Condition for CCDP]
	\label{thm:optimal_cond}
	$P_{c,m}$ is maximized only if
	\begin{align}
		c_{m',m} = 0, \quad \forall m' \in \mathcal{M}, m' \neq m.
	\end{align}
	Under this optimal condition, the CCDP achieves its maximum value, $P_{c,m}^*$, given by
	\begin{align}
		\label{eq:pcm_optimal}
		P_{c,m}^* = \int_0^{\infty} f_{X_m}(r; |c_{m,m}|) \left( 1-e^{-\frac{r^2}{2\sigma^2}} \right)^{M-1} dr.
	\end{align}
\end{proposition}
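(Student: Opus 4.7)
The plan is to derive this result as a direct corollary of Theorem \ref{thm:monotonicity}, which has already established the monotonic dependence of $P_{c,m}$ on each interfering amplitude $|c_{m',m}|$. Since Theorem \ref{thm:monotonicity} states that $P_{c,m}$ is strictly decreasing in each $|c_{m',m}|$ for $m' \neq m$, and since the amplitudes $|c_{m',m}|$ are non-negative quantities whose smallest possible value is $0$, maximization with respect to any single interfering amplitude forces it to vanish. Because the strict monotonicity holds separately in each of the $M-1$ interfering amplitudes, and because these quantities can be driven to zero independently (via appropriate choice of $\mathbf{F}_\mathrm{T}$ and $\mathbf{F}_\mathrm{R}$), the joint maximizer of the integrand in \eqref{eq:pcm_integral} over the off-diagonal effective channels is uniquely characterized by $c_{m',m}=0$ for all $m'\neq m$. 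This yields the necessity of the stated condition.

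For the closed-form evaluation of $P_{c,m}^{*}$, I would substitute $|c_{m',m}|=0$ into the CDF of $X_{m'}$. Using the identity $Q_1(0,b)=e^{-b^{2}/2}$ for the Marcum Q-function, one immediately obtains
\begin{align}
F_{X_{m'}}(r;0)=1-Q_1\!\left(0,\tfrac{r}{\sigma}\right)=1-e^{-r^{2}/(2\sigma^{2})},
\end{align}
which is the CDF of the Rayleigh distribution of $|n_{m'}|$ with $n_{m'}\sim\mathcal{CN}(0,2\sigma^{2})$. Because the noise terms $\{n_{m'}\}_{m'\neq m}$ are mutually independent (and independent of $n_m$), the product of the $M-1$ identical CDF factors collapses to $(1-e^{-r^{2}/(2\sigma^{2})})^{M-1}$. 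Inserting this into \eqref{eq:pcm_integral} directly yields the expression \eqref{eq:pcm_optimal}.

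The main step to treat carefully is the logical passage from ``strict monotonicity in each $|c_{m',m}|$ individually'' to ``simultaneous vanishing is the unique maximizer.'' I would argue this by contradiction: if a maximizer had some $|c_{m',m}|>0$, Theorem~\ref{thm:monotonicity} would allow us to decrease that single coordinate (holding the others fixed) and strictly enlarge $P_{c,m}$, contradicting maximality. Iterating over all $m'\neq m$ establishes the necessity claim. Aside from this short argument, the remainder of the proof is a routine substitution, so no real obstacle is anticipated; the only technicality worth noting is that the feasibility of setting all $c_{m',m}=0$ simultaneously is a condition on the achievable set of effective channels, which is taken as the target rather than an assumption to be proved here.
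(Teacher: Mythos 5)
Your proposal is correct and follows essentially the same route as the paper's proof: invoking the strict monotonicity from Theorem~\ref{thm:monotonicity} to force $|c_{m',m}|=0$ for all $m'\neq m$, then substituting the Rayleigh CDF $1-e^{-r^2/(2\sigma^2)}$ (via $Q_1(0,b)=e^{-b^2/2}$) into \eqref{eq:pcm_integral}. Your added coordinate-wise contradiction argument and the remark that achievability of the zero-interference condition is a design target rather than something to prove are sensible elaborations of the same idea, not a different approach.
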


\begin{proof}
	This proposition is a direct consequence of Theorem~\ref{thm:monotonicity}. Since $P_{c,m}$ is a strictly decreasing function of each interference amplitude $|c_{m',m}|$ for $m' \neq m$, the maximization of $P_{c,m}$ is achieved when $|c_{m',m}| = 0$, $\forall m' \neq m$. In this scenario, the signals at the unintended output ports consist solely of noise, i.e., $y_{m'} = n_{m'}$. The amplitude $|y_{m'}|$ is thus a Rayleigh-distributed random variable, whose CDF becomes
	\begin{equation}
		\label{eq:rayleighcdf}
		\begin{aligned}
			F_{X_{m'}}(r; 0) &= 1 - Q_1\left(0, \frac{r}{\sigma}\right) \\
			&= 1 - e^{-\frac{r^2}{2\sigma^2}}. 
		\end{aligned}
	\end{equation}
	Substituting \eqref{eq:rayleighcdf} into \eqref{eq:pcm_integral} yields the expression for the maximum achievable CCDP, $P_{c,m}^*$, as given in \eqref{eq:pcm_optimal}.
\end{proof}

\begin{remark}
    Theorem \ref{thm:monotonicity} and Proposition \ref{thm:optimal_cond} provide a guideline for the design of the PDNN. That is, the beamforming matrices $\mathbf{F}_{\text{T}}$ and $\mathbf{F}_{\text{R}}$ should collaborate to null the signal path to all unintended receiver ports and maximize the signal magnitude at the intended receiver port, which are essential for robust non-coherent detection.
\end{remark}

\subsection{Closed-Form SER Analysis}
In this subsection, we derive some accurate and asymptotic closed-form expressions for SER analysis of the proposed PDNN-SSK communication system. The following theorem provides an accurate close-form SER expression under the optimal condition for detection given in Proposition~\ref{thm:optimal_cond}.

\begin{theorem}[Accurate closed-form SER]
	\label{thm:closed_form_ser}
	Under the optimal detection condition where inter-port interference is completely eliminated (i.e., $c_{m',m} = 0, \forall m' \in \mathcal{M}, m' \neq m$), the SER for transmitting symbol $\mathbf{s}_m$, denoted as $\text{SER}_m$, is given by the closed-form expression:
	\begin{align}
		\label{eq:ser_closed_form}
		\text{SER}_m = \sum_{k=1}^{M-1} \binom{M-1}{k} \frac{(-1)^{k+1}}{k+1} \exp\left(-\frac{\gamma_m k}{k+1}\right),
	\end{align}
	where $\gamma_m = |c_{m,m}|^2 / (2\sigma^2)$ is the effective signal-to-noise ratio (SNR) at the intended output port $m$.
\end{theorem}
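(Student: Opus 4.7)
The plan is to obtain the SER from Proposition~\ref{thm:optimal_cond} via the identity $\text{SER}_m = 1 - P_{c,m}^*$, using a binomial expansion of the Rayleigh CDF product inside the integral and then evaluating each resulting integral in closed form using a standard trick based on the normalization of the Rician density.

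First, I would apply the binomial theorem to $(1 - e^{-r^2/(2\sigma^2)})^{M-1}$ in \eqref{eq:pcm_optimal} and exchange the finite sum with the integral to obtain
\begin{align}
P_{c,m}^* = \sum_{k=0}^{M-1} \binom{M-1}{k} (-1)^k \, I_k, \quad I_k = \int_0^\infty f_{X_m}(r;|c_{m,m}|) e^{-kr^2/(2\sigma^2)} dr.
\end{align}
The $k=0$ term reduces to $I_0 = 1$ since $f_{X_m}(\cdot;|c_{m,m}|)$ is a probability density.

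The main (and only nontrivial) step is evaluating $I_k$ for $k \ge 1$. I would insert the Rician PDF and combine the two Gaussian factors into $\exp(-(k+1)r^2/(2\sigma^2))$, then perform the substitution $v = r\sqrt{k+1}$, which rescales the Bessel argument to $v\tilde{\mu}/\sigma^2$ with $\tilde{\mu} = |c_{m,m}|/\sqrt{k+1}$. After pulling out a factor of $1/(k+1)$, the remaining integral is exactly the normalization integral of a Rician density with non-centrality parameter $\tilde{\mu}$, except for the missing $e^{-\tilde{\mu}^2/(2\sigma^2)}$ factor; so the integral equals $e^{\tilde{\mu}^2/(2\sigma^2)}$. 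Combining with the overall $e^{-|c_{m,m}|^2/(2\sigma^2)}$ and using $\gamma_m = |c_{m,m}|^2/(2\sigma^2)$, this yields
\begin{align}
I_k = \frac{1}{k+1} \exp\!\left(-\gamma_m\!\left(1 - \frac{1}{k+1}\right)\right) = \frac{1}{k+1} \exp\!\left(-\frac{\gamma_m k}{k+1}\right).
\end{align}

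Finally, I would substitute this back, observe that the $k=0$ term cancels the leading $1$ in $\text{SER}_m = 1 - P_{c,m}^*$, and absorb the sign $(-1)^k$ with the remaining minus sign to produce the claimed expression \eqref{eq:ser_closed_form}. The main obstacle is the evaluation of $I_k$; the key insight is recognizing that weighting a Rician density by a Gaussian kernel again produces (up to a scalar factor) an un-normalized Rician density with a reduced non-centrality parameter, so its integral can be read off directly from the normalization condition rather than requiring any series manipulation of the Bessel function.
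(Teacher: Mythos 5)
Your proposal is correct and follows essentially the same route as the paper: binomial expansion of $\bigl(1-e^{-r^2/(2\sigma^2)}\bigr)^{M-1}$, interchange of sum and integral, closed-form evaluation of $\mathcal{I}_k$, and the relation $\text{SER}_m = 1 - P_{c,m}^*$. The only (minor) difference is that the paper evaluates $\mathcal{I}_k$ by invoking the table identity $\int_0^\infty x \exp(-ax^2) I_0(bx)\,dx = \frac{1}{2a}\exp\bigl(\frac{b^2}{4a}\bigr)$, whereas you obtain the same value self-containedly by rescaling and recognizing an un-normalized Rician density with non-centrality $|c_{m,m}|/\sqrt{k+1}$ --- an equally valid substitute for the table lookup.
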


\begin{proof}
	We start with applying the binomial expansion to the binomial term in the CCDP expression given by \eqref{eq:pcm_optimal}:
	\begin{align}
		\left( 1 - e^{-\frac{r^2}{2\sigma^2}} \right)^{M-1} = \sum_{k=0}^{M-1} \binom{M-1}{k} (-1)^k e^{-\frac{k r^2}{2\sigma^2}}.
	\end{align}
	Substituting this into \eqref{eq:pcm_optimal} and interchanging the order of summation and integration, which is permissible due to the absolute convergence of the integral for each term, we obtain:
	\begin{align}
		\label{eq:pc_sum_integral}
		P_{c,m}^* = \sum_{k=0}^{M-1} \binom{M-1}{k} (-1)^k \int_0^{\infty} f_{X_m}(r; |c_{m,m}|) e^{-\frac{k r^2}{2\sigma^2}} dr.
	\end{align}
	Let's evaluate the integral term, which we denote as $\mathcal{I}_k$:
	\begin{align}
        \label{eq:I_k_1}
		\mathcal{I}_k &= \int_0^{\infty} f_{X_m}(r; |c_{m,m}|) e^{-\frac{k r^2}{2\sigma^2}} dr \nonumber \\
		&= \int_0^{\infty} \frac{r}{\sigma^2} \exp\left(-\frac{r^2+|c_{m,m}|^2}{2\sigma^2}\right) I_0\left(\frac{r|c_{m,m}|}{\sigma^2}\right) e^{-\frac{k r^2}{2\sigma^2}} dr \nonumber \\
		&= \frac{1}{\sigma^2}\exp\left(-\frac{|c_{m,m}|^2}{2\sigma^2}\right) \nonumber \\
		&\quad \times \int_0^{\infty} r \exp\left(-\frac{(k+1)r^2}{2\sigma^2}\right) I_0\left(\frac{r|c_{m,m}|}{\sigma^2}\right) dr.
	\end{align}
	Using the standard integral identity from \cite[Eq. 6.631.4]{gradshteinTableIntegralsSeries2014},
	\begin{align}
		\int_0^\infty x \exp(-ax^2) I_0(bx) dx = \frac{1}{2a} \exp\left(\frac{b^2}{4a}\right).
	\end{align}
	Setting $a = \frac{k+1}{2\sigma^2}$, $b = \frac{|c_{m,m}|}{\sigma^2}$ and substituting these into \eqref{eq:I_k_1} yields:
	\begin{align}
		\mathcal{I}_k &= \frac{1}{k+1} \exp\left(-\frac{|c_{m,m}|^2}{2\sigma^2}\right) \exp\left(\frac{|c_{m,m}|^2}{2\sigma^2(k+1)}\right).
	\end{align}
	By defining the effective SNR as $\gamma_m = |c_{m,m}|^2 / (2\sigma^2)$, we simplify $\mathcal{I}_k$ to:
	\begin{align}
		\mathcal{I}_k = \frac{1}{k+1} \exp\left(-\frac{\gamma_m k}{k+1}\right).
	\end{align}
	Substituting this result back into \eqref{eq:pc_sum_integral}, the correct detection probability is:
	\begin{align}
		P_{c,m}^* = \sum_{k=0}^{M-1} \binom{M-1}{k} \frac{(-1)^k}{k+1} \exp\left(-\frac{\gamma_m k}{k+1}\right).
	\end{align}
	The SER for symbol $m$ is denoted as $\text{SER}_m = 1 - P_{c,m}^*$. Therefore, we have
	\begin{align}
		\text{SER}_m = \sum_{k=1}^{M-1} \binom{M-1}{k} \frac{(-1)^{k+1}}{k+1} \exp\left(-\frac{\gamma_m k}{k+1}\right).
	\end{align}
	This completes the proof.
\end{proof}

\begin{remark}
	The SER expression \eqref{eq:ser_closed_form} is identical in form to that of conventional $M$-ary FSK with non-coherent detection \cite{proakisDigitalCommunications2008}. This correspondence is not a coincidence; it reveals that under the optimal interference-nulling condition, our PDNN-SSK modulation scheme behaves as a set of $M$ orthogonal channels, which is similar to the $M$-ary FSK modulation.
\end{remark}

The SER expression in \eqref{eq:ser_closed_form} is a weighted sum of exponential terms of the form $\exp(-\frac{\gamma_m k}{k+1})$. The function $g(k) = k/(k+1)$ is monotonically increasing for $k \ge 1$. Consequently, as $\gamma_m \to \infty$, the exponential term with the smallest value of $g(k)$ will dominate the sum. This occurs at $k=1$, where $g(1)=1/2$. The subsequent terms, for $k \ge 2$, decay much more rapidly. Therefore, for large $\gamma_m$, the $k=1$ term provides a precise asymptotic expression for the SER. Based on the above observation, the following proposition provides an asymptotic approximation to~\eqref{eq:ser_closed_form}. 

\begin{proposition}[Asymptotic closed-form SER]
	\label{cor:asymptotic_ser}
	For large SNR $\gamma_m$, the SER in \eqref{eq:ser_closed_form} is dominated by the most significant term in the summation (corresponding to $k=1$) and can be tightly approximated by:
	\begin{align}
		\label{eq:ser_asymptotic}
		\text{SER}_m \approx \frac{M-1}{2} \exp\left(-\frac{\gamma_m}{2}\right).
	\end{align}
\end{proposition}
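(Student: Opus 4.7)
The plan is to isolate the $k=1$ summand in the exact SER expression of Theorem~\ref{thm:closed_form_ser} and show that all remaining summands are negligible relative to it in the large-$\gamma_m$ regime. First I would rewrite
\begin{align}
\text{SER}_m = \frac{M-1}{2}\exp\!\left(-\frac{\gamma_m}{2}\right) + \sum_{k=2}^{M-1} \binom{M-1}{k}\frac{(-1)^{k+1}}{k+1}\exp\!\left(-\frac{\gamma_m k}{k+1}\right),
\end{align}
so that the leading term is explicit and the remainder $R(\gamma_m)$ is a finite sum over $k \in \{2,\ldots,M-1\}$.

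Next I would compare the exponents. Let $g(k)=k/(k+1)$. Because $g$ is strictly increasing on $k\ge 1$ with $g(1)=1/2$, for every $k\ge 2$ we have $g(k)-g(1) = (k-1)/(2(k+1)) \ge 1/6$. Applying the triangle inequality and bounding each binomial coefficient by an absolute constant $C_M$ depending only on $M$ gives
\begin{align}
|R(\gamma_m)| \le C_M \sum_{k=2}^{M-1} \exp\!\left(-\frac{\gamma_m k}{k+1}\right) \le C_M (M-2)\exp\!\left(-\frac{2\gamma_m}{3}\right).
\end{align}
Dividing by the leading term yields
\begin{align}
\frac{|R(\gamma_m)|}{\tfrac{M-1}{2}\exp(-\gamma_m/2)} \le \frac{2 C_M (M-2)}{M-1}\exp\!\left(-\frac{\gamma_m}{6}\right) \xrightarrow[\gamma_m\to\infty]{} 0,
\end{align}
which shows that $R(\gamma_m)$ is exponentially smaller than the $k=1$ contribution. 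Hence $\text{SER}_m = \tfrac{M-1}{2}\exp(-\gamma_m/2)\,(1+o(1))$, establishing the claimed asymptotic form.

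I do not anticipate a serious obstacle here, since the proof is essentially a dominant-term extraction from a finite exponential sum; the only mild subtlety is justifying that the alternating-sign remainder is genuinely negligible rather than cancelling with the leading term. This is resolved by the triangle-inequality bound above, which treats the remainder in absolute value and uses the strict gap $g(k)-g(1)\ge 1/6$ for $k\ge 2$ to produce a uniform exponential decay rate that makes the relative error vanish.
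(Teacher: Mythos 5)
Your argument is correct and follows essentially the same route as the paper: the paper's proof is a one-line truncation of the sum at $k=1$, justified by the preceding observation that $g(k)=k/(k+1)$ is increasing so the $k=1$ exponent decays slowest, which is exactly your dominant-term extraction. Your explicit remainder bound via the gap $g(k)-g(1)\ge 1/6$ simply makes that truncation rigorous, so no substantive difference.
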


\begin{proof}
    The proof can be completed by simply truncating the summation in \eqref{eq:ser_closed_form} while retaining the term with $k=1$.
\end{proof}

\begin{remark}
	The asymptotic expression in Proposition \ref{cor:asymptotic_ser} provides critical insights into the system's scalability. A key observation is how the modulation order, $M$, impacts the SER. Notably, $M$ only affects the leading multiplicative constant, $(M-1)/2$, while the exponential term remains independent of $M$. This behavior is different from many common modulation schemes, such as M-QAM or M-PSK. In those schemes, increasing the modulation order $M$ directly reduces the multiplier on SNR within the exponential term and causes a SNR penalty. In contrast, the PDNN-SSK modulation scheme demonstrates a more graceful trade-off between the modulation order and energy efficiency.
\end{remark}

\section{PDNN Optimization for Optimal Detection}

\subsection{Problem Formulation}
\label{sec:problem_formulation}

Theorems \ref{thm:monotonicity} and Proposition \ref{thm:optimal_cond} in Section \ref{sec:performance} provide a clear directive for minimizing the symbol error rate: For any transmitted symbol $\mathbf{s}_m$, the system should be configured to maximize the signal power at the intended output port $m$ while simultaneously suppressing the interference power at all other ports $m' \neq m$. This principle serves as an objective for the optimal design of PDNN parameters. Accordingly, we define a symbol-wise signal-to-interference-plus-noise ratio (SINR), which quantifies the ratio of the desired signal power to the combined power of all interference signals and thermal noise:
\begin{align}
	\label{eq:sinr_def}
	\text{SINR}_{m} \triangleq \frac{|c_{m,m}|^{2}}
	{\displaystyle \sum_{m'\neq m}|c_{m',m}|^{2}+\sigma_{n}^{2}},
\end{align}
where $\sigma_{n}^{2}=2\sigma^2$ represents the effective noise variance. When the interference term $|c_{m',m}|$ is extremely small—potentially even below the noise level—the optimization objective should shift from further suppressing interference to enhancing the desired signal $|c_{m,m}|$. To reflect this behavior, we incorporate the effect of noise into the optimization objective. 

With inter-symbol interference existing, directly maximizing the CCDP in \eqref{eq:pcm_integral} is mathematically intractable due to the complex integral form. Instead, we formulate the phase shift optimization problem of PDNN by maximizing the effective achievable sum-rate as follows:
\begin{subequations}
	\label{eq:opt_problem}
	\begin{align}
		\max_{\{\boldsymbol{\Phi}_{\mathrm{T}}^{(l)}\}, \{\boldsymbol{\Phi}_{\mathrm{R}}^{(l)}\}} \quad & \sum_{m=1}^{M}\log_2(1+\text{SINR}_{m}) \label{eq:opt_obj}, \\
		\mathrm{s.t.} \quad\quad & |[\boldsymbol{\Phi}_{\mathrm{T}}^{(l)}]_{n,n}|=1, \; \forall n, \forall l \in \mathcal{L}_\text{T}, \label{eq:opt_const_tx} \\
		& |[\boldsymbol{\Phi}_{\mathrm{R}}^{(l)}]_{n,n}|=1, \; \forall n, \forall l \in \mathcal{L}_\text{R}. \label{eq:opt_const_rx}
	\end{align}
\end{subequations}
This problem is highly non-convex and thus intractable to solve, primarily due to the inherent cascaded architecture of the PDNNs. Prior work has addressed similar problems in multi-user MISO scenarios using projected gradient ascent methods with Armijo step size \cite{anStackedIntelligentMetasurfaces2025}. However, such approaches are prone to getting trapped in local optima, resulting in substantial performance gaps from the global optimum. To overcome this limitation, we adopt a surrogate model-based training strategy in this work, leveraging the powerful global optimization capability of machine learning optimizers. This approach allows automatic differentiation and adaptive gradient-based optimization of the effective achievable sum-rate.

\begin{remark}
    The choice of an appropriate metric to quantify the combined impact of noise and interference in such scenarios warrants further discussion. In this work, we adopt a simple yet effective approach, where the total power of noise and interference is combined additively.
\end{remark}

\subsection{Surrogate Model Training}

To solve the optimization problem in \eqref{eq:opt_problem}, we reframe this task as training a surrogate model that mirrors the PDNN-SSK communication system. In this paradigm, the signal evolution through the TX-PDNN, the wireless channel, and the RX-PDNN is represented as a sequence of differentiable layer operations. The tunable phase shifts $\{\beta_{n,\text{T}}^{(l)}\}$ and $\{\beta_{n,\text{R}}^{(l)}\}$, which form the diagonal elements of the optimization matrices $\{\boldsymbol{\Phi}_{\mathrm{T}}^{(l)}\}$ and $\{\boldsymbol{\Phi}_{\mathrm{R}}^{(l)}\}$ in \eqref{eq:opt_problem}, are treated as the trainable parameters of this model.

The forward propagation process of the surrogate model is firstly to compute the entire end-to-end effective channel matrix $\mathbf{C} = \mathbf{F}_\text{R} \mathbf{H} \mathbf{F}_\text{T}$, with $c_{m',m}$ as its $m'$-th row and $m$-th column element. Subsequently, all $M$ SINRs are calculated to obtain the loss function value. To maximize the expression in \eqref{eq:opt_obj}, the loss function for minimization is designed as $ \mathcal{J}=-\sum_{m=1}^{M}\log(1+\text{SINR}_{m})$ in the training framework. By converting the whole system into a surrogate model, we can employ the back-propagation algorithm to acquire the gradients of the loss function with respect to all trainable phase parameters.

\begin{algorithm}[bp]
    \caption{PDNN Training via Surrogate Model}
    \label{alg:training}
    \begin{algorithmic}[1]
        \STATE \textbf{Input:} Channel matrix $\mathbf{H}$, fixed diffractive matrices $\{\mathbf{W}_\text{T}^{(l)}\}, \{\mathbf{W}_\text{R}^{(l)}\}$, modulation order $M$, noise power $\sigma_n^{2}$.
        
        \STATE Randomly initialize all phase parameters $\{\beta_{n,\text{T}}^{(l)}\}, \{\beta_{n,\text{R}}^{(l)}\}$.
        \STATE Initialize Adam optimizer with a learning rate $\eta$.
        \FOR{epoch = 1 to $N_{\text{epochs}}$}
        \STATE Compute the end-to-end effective channel matrix $\mathbf{C} = \mathbf{F}_\text{R} \mathbf{H} \mathbf{F}_\text{T}$.
        \STATE Calculate $\{\text{SINR}_m\}_{m=1}^M$ from $\mathbf{C}$ via \eqref{eq:sinr_def}.
        \STATE Calculate the total loss $\mathcal{J} = -\sum_{m=1}^{M}\log(1+\text{SINR}_{m})$.
        \STATE Compute gradients of $\mathcal{J}$ with respect to all phase parameters $\{\beta_{n,\text{T}}^{(l)}\}, \{\beta_{n,\text{R}}^{(l)}\}$ using back-propagation.
        \STATE Update all phase shifts using the Adam optimizer step.
        \ENDFOR
        \STATE \textbf{Output:} Optimized $\{\boldsymbol{\Phi}_{\mathrm{T}}^{(l)*}\}, \{\boldsymbol{\Phi}_{\mathrm{R}}^{(l)*}\}$.
    \end{algorithmic}
\end{algorithm}

For the parameter update step, we utilize the Adam optimizer with a preset learning rate $\eta$ \cite{kingmaAdamMethodStochastic2017}, which incorporates both first-order momentum and second-order momentum with adaptive per-parameter scaling to dynamically adjust the learning rate for each phase shifter and effectively dampen oscillations, accelerate convergence through flat regions, and escape local minima. The overall optimization procedure is detailed in Algorithm~\ref{alg:training}.

\begin{figure*}[ht!]
    \centering
    \includegraphics[width=1 \textwidth]{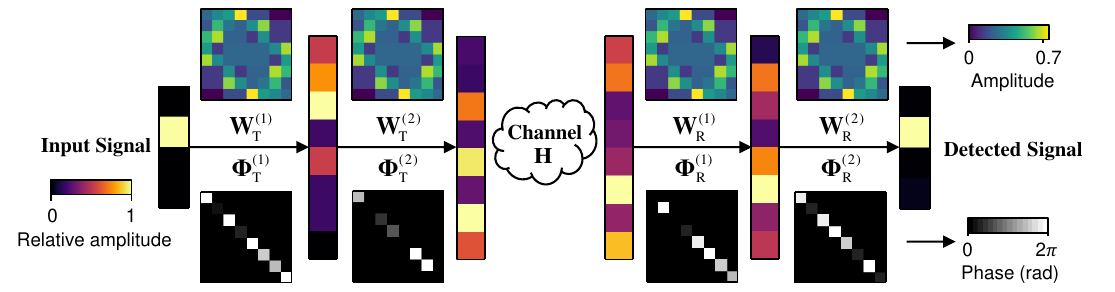}
    \vspace{-0.1cm}
    \caption{Visualization of the wave-based signal processing in the PDNN-SSK communication system with $L_T=L_R=2$. The vertical strips are 1D heatmaps of the signal amplitude, while the square and diagonal heatmaps illustrate the coupling amplitude matrices and the trained phase shift matrices, respectively.}
    \label{fig:EMwaveProp}
\end{figure*}

\section{Numerical Results}
In this section, we present numerical results to validate the performance of the proposed PDNN-SSK communication system and the associated optimization algorithm.

\subsection{Simulation Setup}
\label{sec:sim_setup}
Unless otherwise specified, the following setup is adopted for all simulations. The fixed diffractive coupling matrices, $\mathbf{W}^{(l)}$, of the PDNN layers are constructed based on the cascaded branch-line coupler design detailed in Section \ref{s2b}, with the phase delay of the direct paths set as $\theta = \pi/4$. All the layers of TX-PDNN and RX-PDNN have the same number of ports, except for the input and output layers, i.e. $N^{(l_\text{T})}_\text{T}=N^{(l_\text{R})}_\text{R}=N$, where $l_\text{T}=1,\dots,L_\text{T}, l_\text{R}=0,\dots,L_\text{R}-1$. For each independent channel realization $\mathbf{H}$, the TX and RX-PDNNs are jointly trained to optimize the effective achievable sum-rate defined in \eqref{eq:opt_obj}. The optimization is performed using the gradient-based algorithm detailed in Algorithm~\ref{alg:training}. The phase shift parameters of all PDNN layers are randomly initialized from a uniform distribution $\mathcal{U}(-\pi, \pi)$ before training begins.

\subsection{EM Wave Propagation}

To visually elucidate the fundamental principle of wave-based signal processing within the proposed PDNN architecture, Fig.~\ref{fig:EMwaveProp} illustrates the propagation of a signal through the end-to-end PDNN-SSK communication system. The transmitter and receiver are each equipped with a 2-layer PDNN ($L_\text{T}=L_\text{R}=2$), each layer comprising $N=8$ ports, for a system supporting modulation order $M=4$. The cascade factor is set to $M_\text{c}^{(l)}=3$ for all $\mathbf{W}_\text{T}^{(l)}$ and $\mathbf{W}_\text{R}^{(l)}$, and their corresponding heatmaps confirm that this configuration establishes strong coupling among most of ports, which is essential for the network's computational capability. The process begins with a single-port excitation at the input. As this signal propagates through the cascaded layers of the transmitting PDNN, the optimized phase shifts from $\mathbf{\Phi}_\text{T}^{(1)}$ and $\mathbf{\Phi}_\text{T}^{(2)}$ and the fixed coupling of $\mathbf{W}_\text{T}^{(1)}$ and $\mathbf{W}_\text{T}^{(2)}$ work in concert to transform the simple input into a high-dimensional wavefront. After traversing the wireless channel $\mathbf{H}$, the receiving PDNN performs the inverse operation: its layers progressively gather the diffused energy, culminating in the precise focusing of the signal's power onto the correct output port corresponding to the initial input. This entire sequence of transformations is not executed by conventional digital processors but occurs inherently as the signal propagates through the PDNNs at the speed of light, thereby offloading the corresponding computational burden of the digital baseband module.

\begin{figure}[t]
    \centering
    \includegraphics[width=0.5\textwidth]{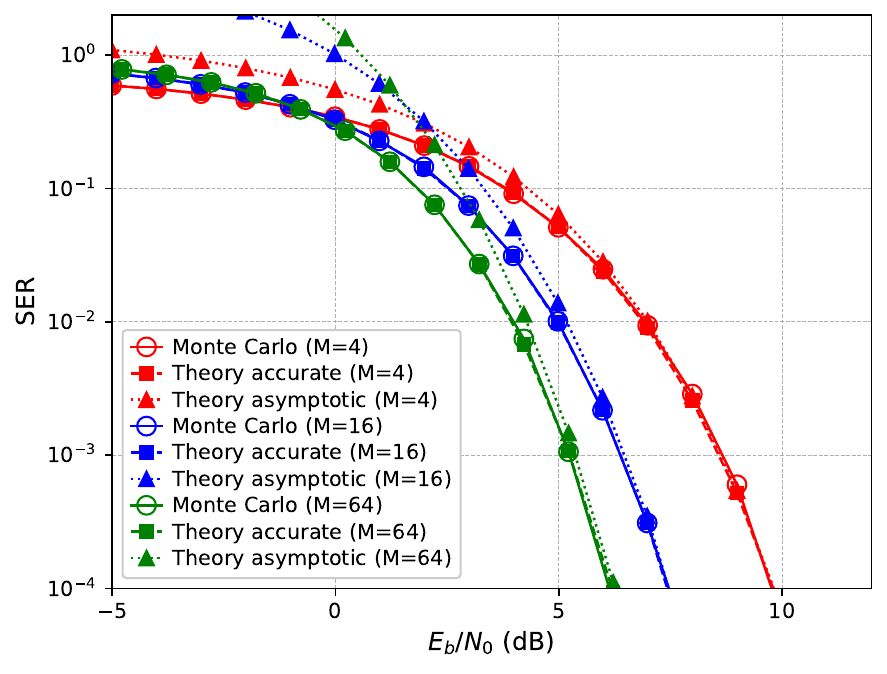}
    \vspace{-0.6cm}
    \caption{Theoretical and simulated SER performance for modulation orders of $M\in \{4, 16, 64\}$}
    \label{fig:theory}
    \vspace{-0.4cm}
\end{figure}

\subsection{SER Performance}

To validate our analytical framework for SER presented in Section III, we conduct Monte Carlo simulations, with the results depicted in Fig.~\ref{fig:theory}. The simulation environment is configured in the ideal, interference-free scenario assumed in Theorem \ref{thm:closed_form_ser} and Proposition \ref{cor:asymptotic_ser}. The SER performance is evaluated for modulation orders of $M\in \{4, 16, 64\}$, with each data point averaged over $10^5$ independent realizations to ensure statistical significance. As depicted in Fig.~\ref{fig:theory}, the $x$-axis represents the SNR per bit, $E_\text{b}/N_0$, which is derived from $\gamma_m$ according to the relation $E_\text{b}/N_0=\gamma_m/(2\log_2M)$. The simulated SER results exhibit a perfect match with the corresponding theoretical curves derived from the accurate closed-form SER expression in \eqref{eq:ser_closed_form}. Meanwhile, the utility of the asymptotic SER expression \eqref{eq:ser_asymptotic} is demonstrated; its corresponding theoretical curves rapidly converge to the accurate SER curve in the high-SNR regime, providing a tight upper bound, particularly for SER values below $10^{-2}$. Therefore, the asymptotic expression offers a low-complexity yet precise tool for system design and analysis at target SER. Comparing across different modulation orders, the results clearly quantify the superior energy efficiency of the PDNN-SSK system as $M$ increases. For instance, to achieve an SER of $10^{-3}$, the required $E_b/N_0$ for $M=4$ is approximately 8.5 dB, whereas for $M=64$, it decreases to about 5 dB. This demonstrates that raising the transmission efficiency from 2 to 6 bits/channel use achieves a considerable SNR gain. This trend starkly contrasts with conventional M-QAM schemes, where increasing the transmission efficiency would incur an SNR penalty to maintain the same error rate. This result confirms that the proposed architecture allows for an energy-efficient scaling of the modulation order, a characteristic that, combined with the inherent light-speed signal processing of PDNNs, highlights its potential for future high-throughput and energy-efficient communication systems.

\begin{figure}[t]
    \centering
    \includegraphics[width=0.5\textwidth]{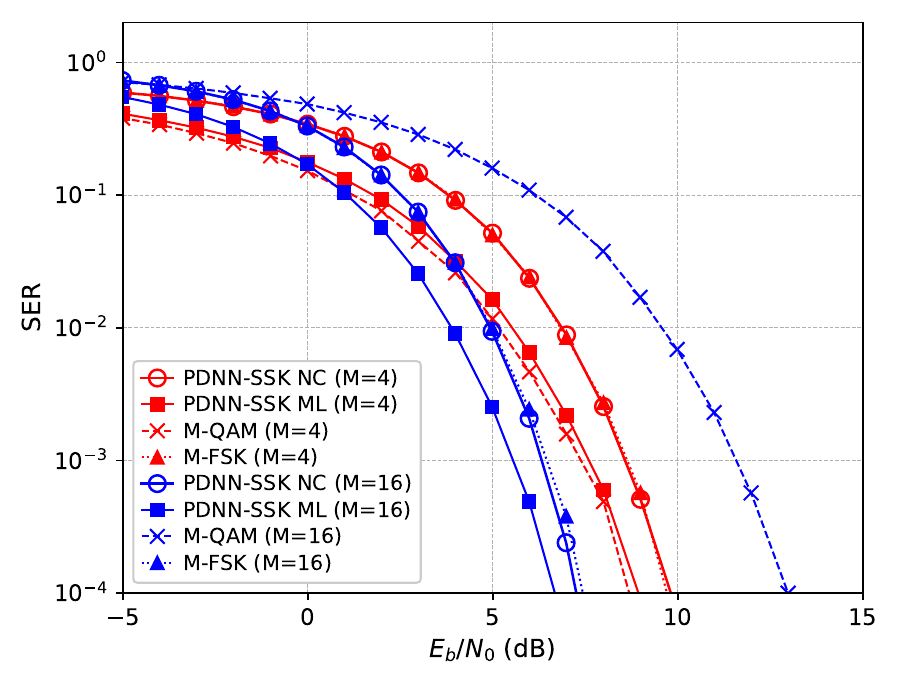}
    \vspace{-0.6cm}
    \caption{SER performance comparison of the PDNN-SSK modulation scheme with conventional ones for $M\in\{4, 16\}$. For the proposed scheme, both a low-complexity NC detector and an optimal ML detector are evaluated.}
    \label{fig:SER_Comparison}
    \vspace{-0.4cm}
\end{figure}

In Fig.~\ref{fig:SER_Comparison}, we further compare the SER performance of the PDNN-SSK system with several benchmark schemes for modulation orders $M\in\{4, 16\}$. Two detectors are considered for the proposed PDNN-SSK: a low-complexity non-coherent (NC) detector that selects the symbol corresponding to the output port with the maximum signal power, and an optimal coherent maximum likelihood (ML) detector. For the ML detector, the estimated symbol index $\hat{m}$ is given by 
\begin{align}
    \hat{m} = \arg\max_{m} \, \mathfrak{R}\{c_{m,m} + n_m\},
\end{align}
where $\mathfrak{R}\{\cdot\}$ denotes the operation of taking the real part. The benchmark schemes include conventional M-ary QAM and non-coherent M-ary FSK. The results show that the performance of the non-coherent PDNN-SSK is nearly identical to that of M-FSK. This is because M-FSK achieves orthogonality between signals in the frequency domain, which is analogous to the spatial orthogonality between the effective channels in the PDNN-SSK under ideal conditions, resulting in a similar mathematical structure for the SER. A trade-off is observed between the two detectors for the PDNN-SSK scheme; the simpler NC detector incurs an SNR penalty of 1-3 dB compared to the optimal ML detector, and this performance gap gradually narrows as the SNR increases. Furthermore, a notable trend is observed with M-QAM: When $M=4$, it is highly energy-efficient due to its effective use of both in-phase and quadrature dimensions. However, as the modulation order increases to $M=16$, its energy efficiency degrades significantly relative to the other schemes. This occurs because, unlike QAM which crowds more points into the same dimensions, schemes like M-FSK and the proposed PDNN-SSK leverage additional dimensions (frequency or space) for modulation, allowing for more graceful scaling in energy efficiency at the potential cost of lower spectral efficiency or multiplexing gain.

\subsection{The Training Approach}

\begin{figure}[t]
    \centering
    \includegraphics[width=0.48\textwidth]{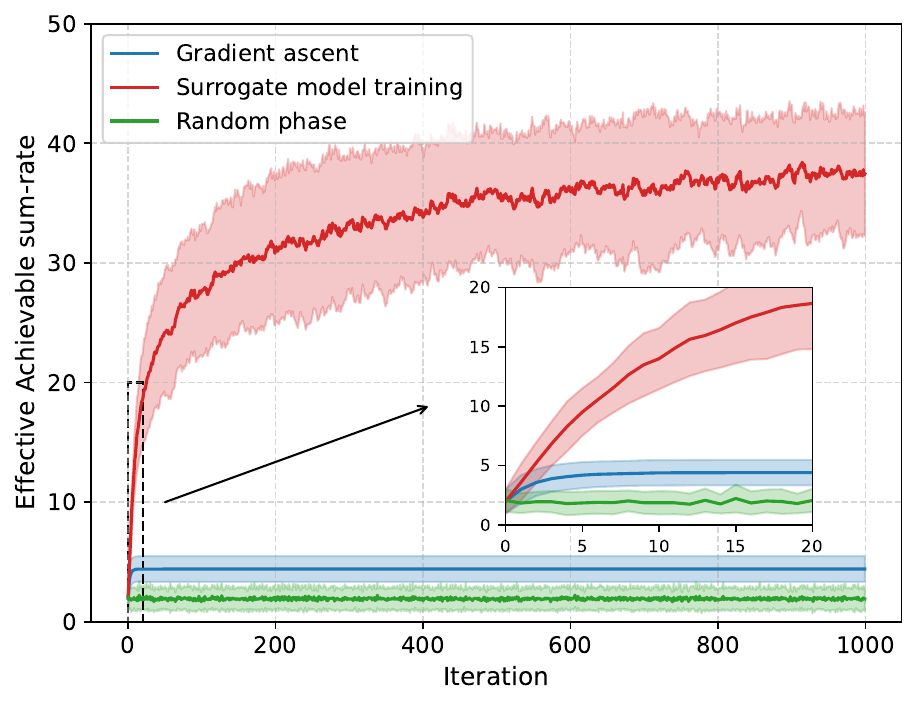}
    \vspace{-0.1cm}
    \caption{Performance comparison of different methods. The solid line represents the mean value over all channel realizations, while the shaded area indicates the standard deviation.}
    \label{fig:convergence}
    \vspace{-0.4cm}
\end{figure}

Fig.~\ref{fig:convergence} illustrates the convergence behavior of the surrogate model training algorithm and compares it with two benchmark schemes: a conventional projected gradient ascent method and a random phase configuration. For this analysis, the system is configured with a modulation order of $M=4$, and the PDNNs at both the transmitter and receiver feature $L_\text{T}=L_\text{R}=2$ layers with $N=16$ ports each. To ensure statistical robustness against the randomness of the channel, the performance is averaged over 100 independent channel realizations. The surrogate model training utilizes the Adam optimizer with a learning rate of 0.1 for 1000 epochs, while the projected gradient ascent is equipped with Armijo step~\cite{anStackedIntelligentMetasurfaces2025}, which leverages the backtracking line search at each iteration. The results demonstrate the superior performance of the surrogate model training approach, which rapidly and consistently improves the effective achievable sum-rate and exhibits stable convergence. In contrast, the projected gradient ascent method shows negligible improvement after the initial several iterations, quickly becoming trapped in a local optimum, as highlighted by the inset figure. The random phase approach, serving as a worst-case baseline, yields an extremely low sum-rate, indicating a lack of effective beamforming. This comparison underscores the ability of the training method to effectively navigate the high-dimensional and non-convex optimization landscape, finding a high-quality solution where traditional optimization methods fail.

\begin{figure}[t]
    \centering
    \includegraphics[width=0.48\textwidth]{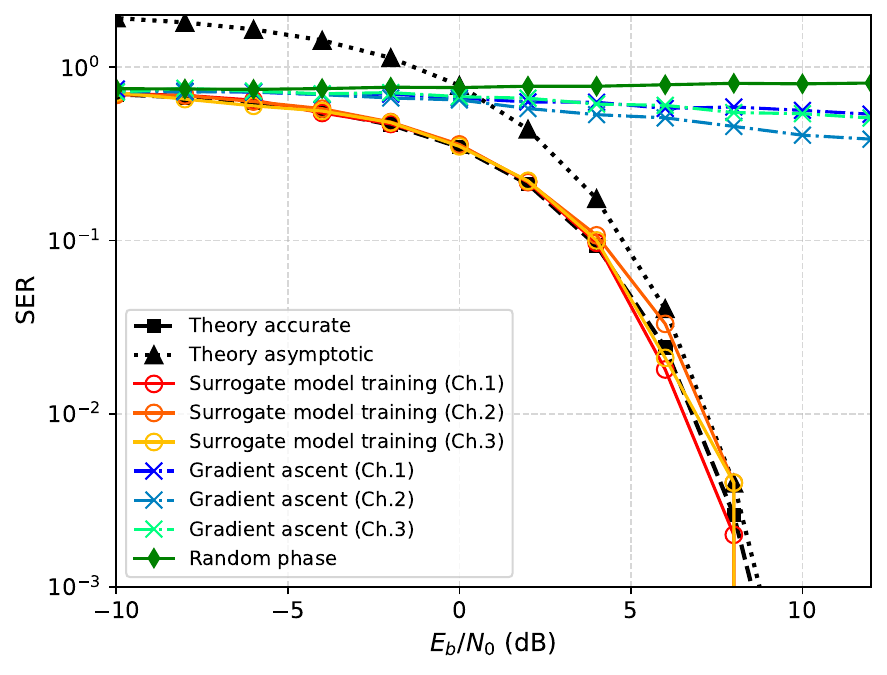}
    \vspace{-0.1cm}
    \caption{Simulated SER performance versus SNR with optimized PDNNs. The theoretical curves are shown for reference.}
    \label{fig:trainedSER}
\end{figure}

The impact of superior optimization performance on the end-to-end communication quality is further validated in Fig.~\ref{fig:trainedSER}. This analysis illustrates the SER performance with the optimized PDNN for 3 distinct channel realizations, with the same system setup as that in Fig.~\ref{fig:convergence}. The theoretical SER curves are plotted based on Theorem \ref{thm:closed_form_ser} and Proposition \ref{cor:asymptotic_ser}, representing the optimal performance achievable in an ideal interference-free condition. The simulated SER curves for the surrogate model training method are shown to closely track the theoretical lower bound, which demonstrates that our algorithm successfully configures the PDNNs to almost completely nullify inter-symbol interference, thereby achieving near-optimal performance. Conversely, the solutions found by the gradient ascent method result in a high SER error floor. In the high-SNR regime, where noise effect is negligible, the performance of this method is limited by the strong residual inter-symbol interference it failed to eliminate. This causes the relevant SER curves to saturate at a constant value, rather than continuing to decrease. The random phase scheme keeps performing the worst, and at higher SNRs, the presence of interference even causes the SER to increase. These results confirm that the high achievable sum-rate attained by our training approach translates into outstanding SER performance, validating the effectiveness of the sum-rate objective and the proposed method for training the PDNN-SSK system.

\subsection{Effects of Structural Parameters}

\begin{figure}[t]
    \centering
    \includegraphics[width=0.48\textwidth]{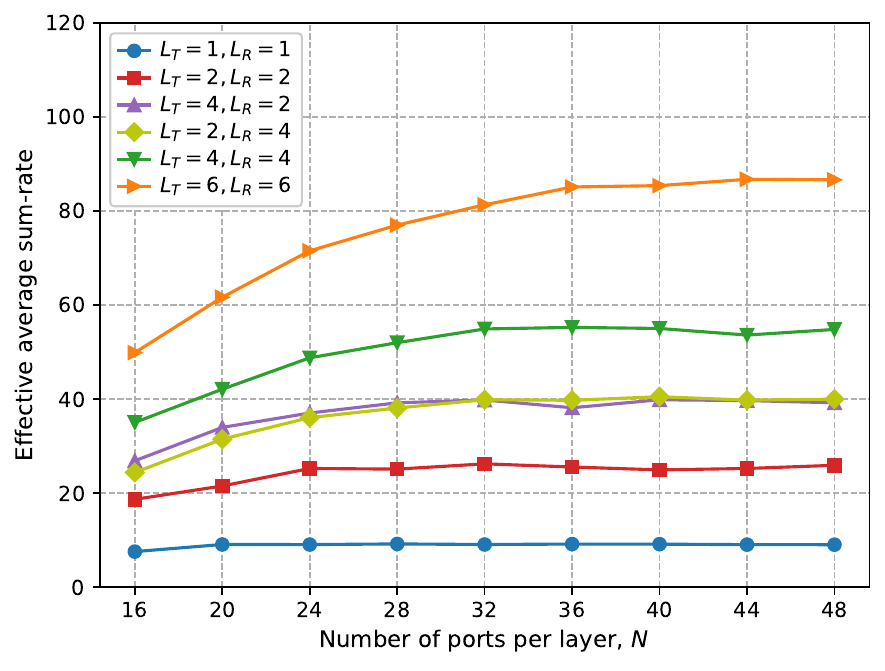}
    \vspace{-0.1cm}
    \caption{Effective achievable sum-rate as a function of $N$ for various PDNN layer configurations.}
    \label{fig:rate_vs_NandLTLR}
    \vspace{-0.4cm}
\end{figure}

Fig.~\ref{fig:rate_vs_NandLTLR} illustrates the effective average sum-rate as a function of the number of ports per layer ($N$) for various PDNN layer configurations ($L_\text{T}, L_\text{R}$). In this simulation, we set $M = 16$, and the results are averaged over 100 independent channel realizations. The results indicate that the depth of the PDNN is a dominant performance factor, with a significant sum-rate enhancement observed as the total number of layers increases, which is attributed to the enhanced signal processing capability of deeper networks. The nearly identical sum-rates for asymmetric configurations suggest that PDNNs can be deployed with spatial asymmetry to suit specific hardware constraints without a significant performance penalty. Moreover, the impact of the number of ports is strongly correlated with network depth. For the shallowest network ($L_\text{T} = L_\text{R} = 1$), the sum-rate shows no improvement with increasing $N$, as spatially distant ports in a shallow network cannot be effectively coupled. For deeper configurations, the sum-rate generally improves as $N$ increases but with diminishing returns, and the saturation point shifts to higher values of $N$ as the depth increases. For example, the curve for $L_\text{T} = L_\text{R} = 2$ saturates around $N = 24$, but the curve for $L_\text{T} = L_\text{R} = 6$ continues improving until $N = 36$, suggesting that the increased depth allows more effective utilization of a wider network.

\begin{figure}[t]
    \centering
    \includegraphics[width=0.48\textwidth]{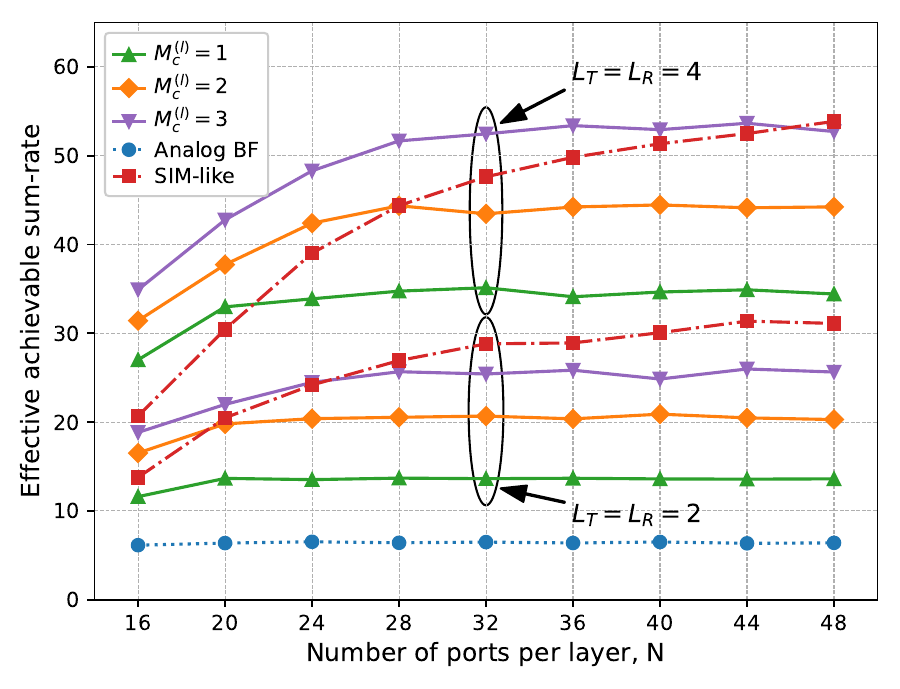}
    \vspace{-0.1cm}
    \caption{Effective achievable sum-rate comparison for different PDNN coupling configurations $(L_\text{T}, L_\text{R})$.}
    \label{fig:rate_vs_NandMl}
    \vspace{-0.4cm}
\end{figure}

Further investigation into the physical interconnection structure is presented in Fig.~\ref{fig:rate_vs_NandMl}, which compares coupler-based PDNNs, a SIM-like planar diffraction network \cite{wangIntegratedPhotonicMetasystem2022}, and a conventional analog beamforming baseline, under simulation conditions consistent with the previous analysis. The SIM-like network models free-space diffraction, with its interconnection matrix $\mathbf{W}^{(l)}$ constructed using the Rayleigh-Sommerfeld formula:
\begin{align}
    w_{m, \tilde{m}}^{(l)} = \frac{A_l \cos \chi_{m, \tilde{m}}^l}{r_{m, \tilde{m}}^l} \left( \frac{1}{2\pi r_{m, \tilde{m}}^l} - j \frac{1}{\lambda} \right) e^{j2\pi r_{m, \tilde{m}}^l / \lambda},
\end{align}
where $A_l$ is the area of each meta-atom, $\lambda$ is the carrier wavelength (corresponding to $f_c = 28$ GHz), $r_{m, \tilde{m}}^l$ is the distance between ports of adjacent layers, and $\chi_{m, \tilde{m}}^l$ is the angle of propagation. The inter-layer and inter-element spacings are set to $2\lambda$ and $0.5\lambda$, respectively. The results show that all PDNN architectures substantially outperform the analog beamforming baseline, which is equivalent to an $L=1$ system with no coupling or diffraction. For the coupler-based PDNNs, the sum-rate significantly increases with the number of cascaded coupler layers, $M_\text{c}^{(l)}$. For the 4-layer configuration, increasing $M_\text{c}^{(l)}$ from 1 to 3 raises the peak sum-rate by over 50\% for $N\ge 28$. Meanwhile, increasing $M_\text{c}^{(l)}$ enables the network to effectively leverage a greater width, comparable to increasing the layers. In contrast, the SIM-like network does not need artificially designed couplers and is capable of establishing full interconnection between adjacent layers through diffraction. In the 2-layer case, this superior connectivity allows it to outperform all types of coupler-based PDNNs under most of configurations. However, in the deeper 4-layer case, the performance of the SIM-like network is not significantly superior to its coupler-based counterpart and is outperformed by the $M_\text{c}^{(l)}=3$ network in most cases. This is attributed to the fact that coupler-based PDNNs confine signals within low-loss transmission-line waveguides, whereas the free-space diffraction in the SIM-like architecture incurs inherent propagation losses that accumulate over multiple layers.

\begin{remark}
	Our results reveal a fundamental trade-off among network depth ($L$), width ($N$), and in-layer coupling strength ($M_\text{c}^{(l)}$). A network's ability to effectively leverage the increased width is contingent upon sufficient depth and strong internal coupling. As evidenced by the simulations, shallow or weakly coupled networks reach performance saturation quickly with increasing $N$. Conversely, deeper and more strongly coupled architectures continue to scale with $N$, demonstrating that sufficient depth and coupling are prerequisites for leveraging the benefits of a wider network.

\end{remark}

\section{Conclusion}
In this paper, we have proposed a PDNN-empowered communication system that embeds signal processing into the propagation of EM waves through artificially designed planar circuits. By introducing and modeling the PDNN, our work provides a low-latency and energy-efficient signal processing platform. The proposed PDNN-SSK architecture deploys PDNNs to jointly perform modulation, beamforming, and detection of SSK signals directly in the EM domain, thus enabling significantly simplified hardware with a single RF chain and non-coherent power detection. We established a rigorous theoretical foundation for this system, deriving the accurate and asymptotic closed-form SER expressions under the interference-free assumption, which serve as crucial benchmarks for the system's performance. To achieve this condition, we formulated the phase shift configuration problem of PDNNs as an effective sum-rate maximization problem with a non-convex structure, and solved it by developing a surrogate model trained with a gradient-based approach. 

According to the simulation results, our training-based approach was shown to closely track the theoretical performance bounds where conventional gradient ascent methods fail and suffer from a high SER error floor. The results also highlighted the superior energy efficiency and scalability of the modulation scheme against existing technologies like QAM by utilizing spatial resources. Fundamental design principles were revealed; for instance, a deeper structure and stronger internal coupling are essential for PDNNs to effectively widen the network. Looking forward, the PDNN-SSK communication system can be extended to multi-user scenarios, more complex channel conditions, and broadband systems. Moreover, the PDNN architecture can be enhanced with more sophisticated designs to implement advanced functionalities, serving as a versatile RF signal processing front-end for a wide array of future communication and sensing applications.

\bibliographystyle{IEEEtran}
\bibliography{reference.bib}

\end{document}